\title{Irreversible evolution, obstacles in fitness landscapes and persistent drug resistance}
  \author{Kristina Crona}
\theoremstyle{plain}
\newtheorem{theorem}{Theorem}[section]
\newtheorem{conjecture}{Conjecture}
\theoremstyle{definition}
\newtheorem{definition}[theorem]{Definition}
\newtheorem{example}[theorem]{Example}
\newtheorem{observation}[theorem]{Observation}
\begin{document}

\maketitle


\begin{abstract}
We use fitness graphs, or directed cube graphs, for analyzing evolutionary
reversibility. The main application is antimicrobial drug resistance.
Reversible drug resistance has been observed both clinically and experimentally.
If drug resistance depends on a single point mutation, then a possible scenario
is that the mutation reverts back to the wild-type codon after the drug has been discontinued,
so that susceptibility is fully restored. In general, a drug pause does not automatically
imply fast elimination of drug resistance. Also if drug resistance is reversible, 
the threshold concentration for reverse evolution may be lower than for forward evolution.
For a theoretical understanding of evolutionary reversibility,
including threshold asymmetries, it is necessary to analyze
obstacles in fitness landscapes. 
We compare local and global obstacles,  obstacles for forward
and reverse evolution, and conjecture that favorable landscapes for forward evolution
 correlate with evolution being reversible. 
 Both suboptimal peaks and plateaus are analyzed with some observations
 on the impact of redundancy and dimensionality.
Our findings are compared with laboratory studies on irreversible malarial drug resistance.
\end{abstract}

\section{Introduction}
Penicillin was introduced on a large scale 1940 and the spread of penicillin resistance was
documented already in 1942 \citep{lp}. The development of antimicrobial 
drug resistance is an evolutionary
process, and so is the reverse
adaptation back to the
the drug-free environment.
Reverse evolution that restores
the original genotype has been 
observed for HIV patiens \citep{castro, yk},
and in experiments
for other pathogens \citep{bnb}.
However, expectations of fast reversal 
of antimicrobial resistance after a drug pause 
have not always been realized.
Failures to restore susceptibility includes
 nation-wide long term programs
 \citep{sga, esh}.

Costly drug resistance is not likely to persist
in a drug-free environment. If the original wild-type 
is available then regrowth 
may restore susceptibility
(no evolution is necessary).
In addition to extinction and reversion,
a  possible fate for a resistant
genotype is that new
mutations accumulate,
sometimes referred to as
compensatory mutations.
Such mutations decrease the
cost of resistance in the
drug free environment,
and there is usually 
an impact on susceptibility 
as well.

Evolution is described as genotypically irreversible
if the population cannot adapt back to
the original genotype,
and  phenotypically irreversible
if it cannot adapt back
to the original phenotype.
Because of genetic redundancy
in the sense that different sequences code
for the same phenotype, evolution
can be phenotypically reversible
 even if it is genotypically irreversible
 \citep{kjc}.
 An extensive laboratory study on
costly drug resistance  for
12 different antibiotics showed
partly successful adaptation
to the drug free environment
through compensatory 
mutations \citep{dunai}.
However, neither the original
genotype, nor the phenotype,
was restored for any of the drugs,
and in most cases the new genotypes
had clearly lower fitness than
the original wild-type.
In contrast, for some experiments of similar
type the original genotype was restored
in the majority of the trials \citep{bnb},
or at least in some proportion of the trials 
 \citep{mbl, nba}.
If the original genotype 
cannot be restored in experiments,
the reason could be that
evolution is  genotypically
irreversible. Another possible
explanation is that an abundance
of available compensatory mutations
make genotypic reversion unlikely. 
For more background, reversal of drug and
pesticide resistance is
reviewed in \cite{aeb}.

Here, the main topic is genotypically
irreversible evolution
and obstacles that cause
irreversibility. 
For a thorough analysis  it is useful to
consider fitness landscapes.
In brief, the fitness of a genotype  
is a measure of its expected contribution
to the next generation. 
A fitness landscape assigns a fitness value $w_g$,
 i.e., a non-negative number, to each genotype $g$.
Fitness can be thought of as  a 
height coordinate in the landscape.
The level of drug resistance approximates fitness for
a pathogen under  drug exposure.

Throughout the paper we consider fitness landscapes for
 biallelic $L$-locus systems.
For instance, if $L=2$ the genotypes are represented as
 $00$, $10$, $01$ and $11$, where $00$ 
 denotes the wild-type. 
According to conventional
assumptions, the evolutionary process
for a population can be represented as a walk 
in the landscape where each step increases
the height, i.e., the process consists
of a sequence of single point mutations 
$0\mapsto1$ or $1 \mapsto 0$ such that each
mutation increases fitness. 
Unless otherwise stated,
no two genotypes have
the same fitness.
A genotype $g$ is defined as a peak if
all its mutational  neighbors
(genotypes that differ from $g$ at a single locus)
have lower fitness than $g$.

For an overview of evolutionary
potential  it is  convenient to use fitness graphs (Figure 1).
\begin{figure}
A)
\begin{tikzpicture}
[thick, color=black,scale=0.9]
  \node (n1) at (4,0) {00};
  \node  (n2) at (2,2)  {10};
  \node   (n3) at (6,2)  { 01};
  \node [color=red] (n4) at (4,4) {\bf 11};
  \foreach \from/\to in {n2/n4,n3/n4,n1/n2,n1/n3}
  \draw[very thick] [-> ](\from) -- (\to);
\end{tikzpicture}
 \quad
 B)
\begin{tikzpicture}
[thick, color=black,scale=0.9]
 \node [color=red] (n1) at (4,0) {\bf 00};
  \node  (n2) at (2,2)  { 10};
  \node  (n3) at (6,2)  { 01};
  \node  (n4) at (4,4) {11};
  \foreach \from/\to in {n4/n3,n4/n2,n2/n1,n3/n1}
  \draw[very thick] [-> ](\from) -- (\to);
  \end{tikzpicture}
\caption{The fitness graph  for the new environment (A) is favorable since
mutations can accumulate in any order, i.e., both trajectories from
the wild-type $00$ to $11$ are accessible. Reverse evolution from $11$
to $00$ is also straight forward (B).}
\end{figure}
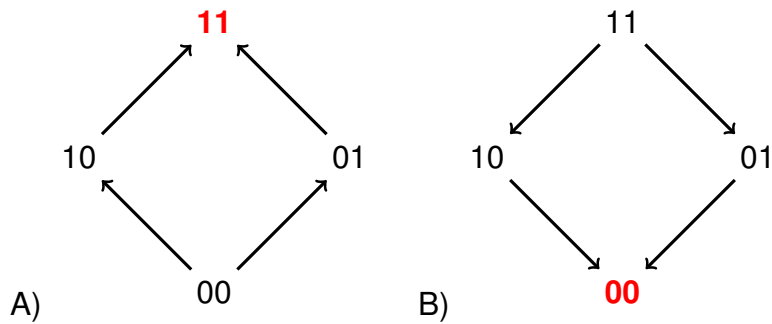
A fitness graph is a directed $L$-cube graph
such that each edge is directed toward
the genotype of higher fitness. 
A path in the graph that respects the
arrows is referred to as an accessible
evolutionary path.

Both forward evolution from the wild-type $00$ to $11$ and reverse evolution
from $11$ to $00$ are straight forward for the fitness graphs
in Figure 1 (the peaks are marked red). 
Mutations can accumulate in any order in the new environment,
and the same is true back in the original environment.
The graph 2A is less favorable than 1A, since only one
trajectory is accessible from $00$ to $11$,
and 2B has no accessible trajectory from $00$ to $11$.
For general $L$, the most favorable fitness graph
is similar to Figure 1. Informally the  
landscape is represented by an "all arrows up" graph.

The obstacles displayed in Figure 2 depend on epistasis, 
or gene interactions. The three fitness graphs have sign epistasis, i.e., the sign of the effect of a mutation, 
whether positive or negative, depends on background. 
The graph 2B, characterized by two peaks,
is said to have reciprocal sign epistasis.
For Graph 2B and C every other genotype is a peak
and the same construction works for any $L$  \citep{h}.
The graphs with 50 $\%$ peak density are called Haldane graphs (see also \citet{cks}).

Note that in the absence of sign epistasis the fitness graph can always 
be described as an all arrows up graph.
For more background, sign epistasis
was introduced in \citet{wwc} and early work on sign epistasis,
fitness graphs and related rank order based concepts includes \citet{pkw, de, ptk, cgb},
see also \citet{c14}. A main topic
concerns the relation between local properties
(such as reciprocal sign epistasis) and global properties
(such as peaks in the global fitness landscapes),
with recent progress in \citet{rpp, skk}.

\begin{figure}
A)
\begin{tikzpicture}
[thick, color=black,scale=0.9]
  \node (n1) at (4,0) {00};
  \node  (n2) at (2,2)  {10};
  \node   (n3) at (6,2)  { 01};
  \node [color=red] (n4) at (4,4) {\bf 11};
  \foreach \from/\to in {n2/n4,n3/n4,n1/n2,n3/n1}
  \draw[very thick] [-> ](\from) -- (\to);
\end{tikzpicture}
B)
\begin{tikzpicture}
[thick, color=black,scale=0.9]
  \node [color=red]  (n1) at (4,0) {\bf 00};
  \node  (n2) at (2,2)  {10};
  \node   (n3) at (6,2)  { 01};
  \node [color=red] (n4) at (4,4) {\bf 11};
  \foreach \from/\to in {n2/n4,n3/n4,n2/n1,n3/n1}
  \draw[very thick] [-> ](\from) -- (\to);
\end{tikzpicture}
 C)
 \begin{tikzpicture}
[very thick, color=black,->,scale=1.4]
\node (n0) at (0,0) {000};
\node  [color=red] (n1) at (1.2,1) {\bf 001};
\node  [color=red] (n2) at (0,1) {\bf 010};
\node (n3) at (1.2,2) {011};
\node  [color=red] (n4) at (-1.2,1) {\bf 100};
\node (n5) at (0,2) {101};
\node (n6) at (-1.2,2) {110};
\node  [color=red]  (n7) at (0,3) {\bf 111};
\draw (n0) -- (n1);
\draw (n0) -- (n2);
\draw (n0) -- (n4);
\draw (n3) -- (n1);
\draw (n3) -- (n2);
\draw (n3) -- (n7);
\draw (n5) -- (n1);
\draw (n5) -- (n4);
\draw (n5) -- (n7);
\draw (n6) -- (n2);
\draw (n6) -- (n4);
\draw (n6) -- (n7);
\end{tikzpicture}
\caption{Three graphs with sign epistasis. 
The graph B and all two locus subsystems  of C
have reciprocal sign epistasis.
Both B and C are Haldane graphs, i.e.,  
they represent fitness landscapes such that 
50 percent of the genotypes are peaks.}
\end{figure}
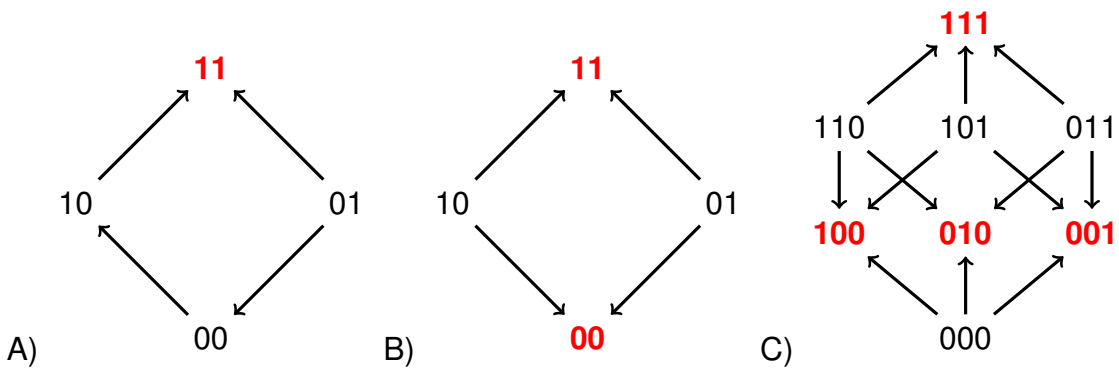
All arrows up graph and Haldane graphs are, in a sense,
opposite extremes.
For the sake of completeness, in addition to all arrows up graph
there a second - perhaps more exotic -
type of fitness landscapes that implies 
straight forward evolution. 
If one assumes that the  fitness
is capped at some number $M$ in a landscape, and that
all genotypes have several neighbors
with fitness $M$, then adaptation
is not difficult. From any starting point
 there exists a single point mutation 
 that results in maximal fitness. 
A slightly more general concept is convenient.
We define a fitness
landscape as a \emph{hop-to-top landscape}
if fitness is capped at some number $M$, and
if for  (almost) all genotypes, there is a short
accessible path to a genotype of  fitness $M (1- \epsilon )$ , 
where $\epsilon$ is some small number.

For instance, a hop-to-top landscape
can be constructed
by drawing fitness values
from a uniform distribution between $0$ and $M$,
and assign fitness randomly to genotypes.
For $L=10,000$ a genotype
is then expected to have 100 neighbors
with fitness at least $0.99M$.
 Under the same assumption for the landscape
except that $L$ is slightly smaller, one  
still gets a hop-to-top landscape, whereas
a sufficiently low $L$-value results in an
unfavorable landscape
(both claims are easy to verify).
It has been proposed that similar
constructions (for large $L$) 
are relevant for speciation \citep{gav},
see also the result section.
 
Haldane graphs, all arrows up graphs
and hop-to-top landscapes are 
theoretical constructions that can
be used as a starting point for 
discussing obstacles in fitness landscapes
and evolutionary reversibility. However, 
few empirical fitness landscapes 
have proved to belong to the extremes.
Obstacles have to be considered
in more general settings.

From a fitness landscape for antimicrobial
drug resistance one can determine
if evolution is reversible.
Whether or not reversion is plausible 
depends on other factors as well, 
including population size and 
mutation frequency  \citep{poh,mbl}.
Such factors will not be discussed here.
Neither will we discuss more
elaborate methods for restoring the original wild-type
that depends on sequences of drugs \citep{mcg, gmc, ty}.

\section{results}
A case of Irreversible malarial drug resistance 
was identified in the study \citet{oh}. 
Several drug concentrations were considered in the study.
Figure 3 shows the fitness graph for the highest drug concentration,
and Figure 4 for the drug-free environment.
The genotype 1111 is the global peak and 0000 has the lowest fitness
in Figure 3, whereas  0000 is the global peak  in Figure 4.
As clear from Figure 4, there is no accessible path from $1111$
to $0000$. Consequently evolution is 
irreversible. (It is of course theoretically possible that a longer 
accessible path from $1111$ to $0000$ exists that
includes new mutations in addition to the reversions.)
We will return to the example repeatedly throughout the paper.

Section 2.1 and 2.2 analyze suboptimal peaks
and plateaus in fitness landscapes,
Sections  2.3  discusses reversibility, and Section 2.4 reversibility
and the impact of fluctuating drug concentrations.

\begin{figure}\label{small}
\begin{tikzpicture}
[very thick,black,->,outer sep=1mm, scale=0.8]
\node (n0) at (0, -4.0) {0000};
\node (n1) at (4.0, -2.0) {0001};
\node (n2) at (1.3333333333333335, -2.0) {0010};
\node (n3) at (5.0, 0) {0011};
\node [color=red] (n4) at (-1.333333333333333, -2.0){\bf 0100};
\node (n5) at (3.0, 0) {0101};
\node  (n6) at (1.0, 0) {0110};
\node  (n7) at (4.0, 2.0) {0111};
\node (n8) at (-3.9999999999999996, -2.0) {1000};
\node (n9) at (-1.0, 0) { 1001};
\node  (n10) at (-3.0, 0) {1010};
\node (n11) at (1.3333333333333335, 2.0) {1011};
\node (n12) at (-5.0, 0) {1100};
\node (n13) at (-1.333333333333333, 2.0) {1101};
\node  (n14) at (-3.9999999999999996, 2.0) {1110};
\node [color=red] (n15) at (0, 4.0) {\bf1111};
\draw (n0) -- (n1);
\draw (n0) -- (n2);
\draw (n0) -- (n4);
\draw (n0) -- (n8);
\draw (n1) -- (n3);
\draw  (n1) -- (n5);
\draw (n1) -- (n9);
\draw (n2) -- (n3);
\draw  (n2) -- (n6);
\draw  (n2) -- (n10);
\draw (n5) -- (n4);
\draw  (n6) -- (n4);
\draw (n12) -- (n4);
\draw (n3) -- (n7);
\draw (n5) -- (n7);
\draw (n6) -- (n7);
\draw (n9) -- (n8);
\draw (n8) -- (n10);
\draw (n12) -- (n8);
\draw (n11) -- (n3);
\draw (n11) -- (n9);
\draw (n11) -- (n10);
\draw  (n13) -- (n5);
\draw  (n13) -- (n9);
\draw (n13) -- (n12);
\draw (n6) -- (n14);
\draw (n10) -- (n14);
\draw (n12) -- (n14);
\draw (n7) -- (n15);
\draw (n11) -- (n15);
\draw (n13) -- (n15);
\draw (n14) -- (n15);
\end{tikzpicture}
\caption{The 16 genotypes represent all combinations of four mutations that individually increase
malarial drug resistance. 
The genotype 1111 is the global peak and the wild-type 0000 has the lowest fitness
in the drug environment. There are several accessible paths from 0000 to 1111.}
\begin{tikzpicture}
[very thick,black,->,outer sep=1mm, scale=0.8]
\node  [color=red] (n0) at (0, -4.0) {\bf 0000};
\node (n1) at (4.0, -2.0) {0001};
\node (n2) at (1.3333333333333335, -2.0) {0010};
\node (n3) at (5.0, 0) {0011};
\node (n4) at (-1.333333333333333, -2.0) {0100};
\node (n5) at (3.0, 0) {0101};
\node  (n6) at (1.0, 0) {0110};
\node  (n7) at (4.0, 2.0) {0111};
\node (n8) at (-3.9999999999999996, -2.0) {1000};
\node (n9) at (-1.0, 0) { 1001};
\node  (n10) at (-3.0, 0) {1010};
\node (n11) at (1.3333333333333335, 2.0) {1011};
\node (n12) at (-5.0, 0) {1100};
\node [color=red]  (n13) at (-1.333333333333333, 2.0) {\bf1101};
\node [color=red] (n14) at (-3.9999999999999996, 2.0) {\bf 1110};
\node (n15) at (0, 4.0) {1111};
\draw (n1) -- (n0);
\draw (n2) -- (n0);
\draw (n4) -- (n0);
\draw (n8) -- (n0);
\draw (n3) -- (n1);
\draw  (n5) -- (n1);
\draw (n1) -- (n9);
\draw (n3) -- (n2);
\draw  (n2) -- (n6);
\draw  (n2) -- (n10);
\draw (n5) -- (n4);
\draw   (n4) -- (n6);
\draw (n4) -- (n12);
\draw  (n3) -- (n7);
\draw (n5) -- (n7);
\draw (n7) -- (n6);
\draw  (n8) -- (n9);
\draw (n10) -- (n8);
\draw  (n12) -- (n8);
\draw (n3) -- (n11);
\draw (n11) -- (n9);
\draw (n11) -- (n10);
\draw  (n5) -- (n13);
\draw  (n9) -- (n13);
\draw (n12) -- (n13);
\draw (n6) -- (n14);
\draw  (n10) -- (n14);
\draw (n12) -- (n14);
\draw (n15) -- (n7);
\draw (n11) -- (n15);
\draw  (n15) -- (n13);
\draw (n15) -- (n14);
\end{tikzpicture}
\caption{The genotype 0000 is the global peak, whereas 1111 has low fitness in
the drug-free environment.
Evolution is irreversible since there is no accessible path from the genotype $1111$ to $0000$}
\end{figure}

\subsection{Suboptimal peaks}
The most simple example of a suboptimal
peak arises if  a double mutant with higher
fitness than the wild-type 
combines two detrimental
single mutations (Figure 2B).
For general $L$ any two-locus subsystem with reciprocal
sign epistasis  constitutes a global obstacle if independent of background.
For instance, assume that
\[
w_{00s} > w_{11 s}> w_{10 s}, w_{01s}    \quad \ast
\]
for all  $s$ of length $L-2$.
Then it is clear that some genotype of the form $g=11\tilde{s}$ is a suboptimal peak.
(Indeed, if $11\tilde{s}$ has maximal fitness among all genotypes of the form
 $11s$, then $11\tilde{s}$ is a peak with lower fitness than $00 \tilde s$.)

A variant of the same theme (bad+bad=good),
is that  the combined effect of replacing two blocks (sets) of  loci 
is positive, whereas the replacement of each block alone is negative. 
Such a system is sometimes referred to as lock-key system \citep{dds}
Similar to the condition $\ast$,  local obstacles constitute global obstacles
if independent of background.

\begin{definition}
For a block of length $L'<L$, assume that $\prec$ is an order of the genotypes in the $L'$-locus subsystem. 
The block is  {\emph{rank order preserving}} if 
 the following condition holds: 
\[
w_{g s} > w_{g's}   \text{ if }  g \succ g',
\]
for $g, g'$ in the $L'$-locus subsystem.
\end{definition}

The following observation is  immediate (and analogous to the implications of $\ast$).

\begin{observation}\label{global}
For each peak in a rank order preserving block
there is a corresponding peak in the global $L$-locus system.
\end{observation}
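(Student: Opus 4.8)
The plan is to mimic, almost verbatim, the parenthetical argument already given for condition $\ast$: pick the best background for a peak of the block and then check that no single mutation improves on it.

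First I would fix notation, partitioning the $L$ loci into the block $B$ of size $L'$ and its complement of size $L-L'$, so that every genotype is written $gs$ with $g$ ranging over the $L'$-cube and $s$ over the $(L-L')$-cube. The role of rank order preservation is that for every fixed background $s$ the fitness graph induced on $\{gs : g\}$ is one and the same directed $L'$-cube, namely the one determined by $\prec$; in particular the phrase ``peak of the block'' is unambiguous, and $g^{*}$ is a peak of the block exactly when $g' \prec g^{*}$ for every Hamming neighbor $g'$ of $g^{*}$ inside $B$.

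Then, given such a peak $g^{*}$ of the block, I would let $\tilde s$ be a background for which $w_{g^{*}\tilde s}$ is maximal among all $w_{g^{*}s}$; by the standing convention that no two genotypes have equal fitness, this maximizer is unique. The claim is that $g^{*}\tilde s$ is a peak of the global $L$-cube, and the verification splits into two cases according to which coordinate a mutational neighbor of $g^{*}\tilde s$ flips. If the flipped coordinate lies in $B$, the neighbor is $g'\tilde s$ with $g'$ a neighbor of $g^{*}$ inside $B$; since $g' \prec g^{*}$, rank order preservation gives $w_{g^{*}\tilde s} > w_{g'\tilde s}$. If the flipped coordinate lies outside $B$, the neighbor is $g^{*}s'$ with $s'\neq\tilde s$, and maximality of $\tilde s$ gives $w_{g^{*}\tilde s} > w_{g^{*}s'}$. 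Every neighbor is therefore strictly worse, so $g^{*}\tilde s$ is a peak; and since distinct peaks of the block already differ on the coordinates of $B$, the assignment $g^{*}\mapsto g^{*}\tilde s$ carries distinct block-peaks to distinct global peaks.

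I do not expect a genuine obstacle here — the text rightly calls the observation ``immediate''. The only two points needing a word of care are that rank order preservation is precisely what makes the block's fitness graph, and hence its set of peaks, independent of background (so that the hypothesis is meaningful at all), and that the no-ties convention is what upgrades the Case~2 inequality from $\geq$ to $>$. One could additionally remark that the correspondence is not canonical, since the optimal background $\tilde s$ may vary with $g^{*}$, but only the existence of a corresponding global peak is being asserted.
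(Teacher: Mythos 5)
Your proposal is correct and matches the paper's intended argument: the observation is stated as ``immediate (and analogous to the implications of $\ast$)'', and the parenthetical proof for $\ast$ is exactly your construction --- take the background $\tilde s$ maximizing fitness over the block peak's fiber and check neighbors inside the block via the rank order condition and neighbors outside via maximality. Nothing further is needed.
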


Figure 5 illustrates  Observation \ref{global}. The two loci in the middles constitute
a rank order preserving block. The four subsystems of the form
\[
\ast 00 \star,  \ast10 \star,  \ast 01 \star, \ast 11 \star
 \]
has reciprocal sign epistasis (marked by blue arrows).
Observation \ref{global} implies that the two peaks $\ast 00 \star,  \ast 11 \star$ in the rank order preserving block correspond
to two peaks in the global system.  The peaks in the global system are $0000$ and $0110$.

The key property of rank order preserving blocks holds in a more general setting.

\begin{definition}
For a block of length $L'<L$, assume that $\prec$ is an order of the genotypes in the $L'$-locus subsystem.
A block of length $L' < L$ is a {\emph{graph preserving block}} if for any two mutational neighbors $g$ and $g'$ in the $L'$-locus subsystem,
\[
w_{g s} > w_{g's}   \text{ if } g  \succ g' 
\]
\end{definition}
Note that the condition  implies that the fitness graph for the  $L'$-locus subsystems defined by the graph preserving block are independent of background (see the four marked 
subgraphs in Figure 4).

\begin{observation}\label{global2}
For each peak in a graph preserving subsystem,
there is a corresponding peak in the global $L$-locus system.
\end{observation}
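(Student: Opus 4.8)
The plan is to mimic the proof of Observation \ref{global}, generalizing from rank order preserving blocks to graph preserving blocks. The key conceptual point is that for a graph preserving block, the fitness \emph{graph} (not the full order) on the $L'$-locus subsystem is the same regardless of the background $s$ of length $L-L'$. So "peak in the subsystem" is well-defined independently of $s$. Let $h$ be such a peak in the $L'$-locus subsystem; I want to produce a peak of the global system having $h$ in its first $L'$ coordinates.

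First I would fix notation: write a global genotype as a concatenation $gs$ with $g$ in the $L'$-block and $s$ the remaining $L-L'$ loci. The two ways a global neighbor of $gs$ can arise are (i) flipping a locus inside the block, giving $g's$ with $g'$ a neighbor of $g$ in the subsystem, or (ii) flipping a locus outside the block, giving $gs'$ with $s'$ a neighbor of $s$. Next, consider the restriction of the global landscape to genotypes of the form $h s$ as $s$ ranges over all length-$(L-L')$ backgrounds; this is itself an $(L-L')$-locus fitness landscape (generically with distinct values, by the standing no-ties assumption), so it has at least one peak — pick $\tilde s$ achieving a peak there, i.e.\ $w_{h\tilde s} > w_{h s'}$ for every neighbor $s'$ of $\tilde s$. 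I claim $g^\star := h\tilde s$ is a peak of the global system. For neighbors of type (ii) this is exactly the choice of $\tilde s$. For neighbors of type (i): such a neighbor is $h' \tilde s$ with $h'$ a neighbor of $h$ in the block; since $h$ is a peak of the subsystem, the (background-independent) fitness graph has the edge oriented from $h'$ toward $h$, i.e.\ $h \succ h'$, so the graph preserving condition gives $w_{h\tilde s} > w_{h'\tilde s}$. Hence every global neighbor of $h\tilde s$ has strictly smaller fitness, so $h\tilde s$ is a global peak corresponding to the chosen subsystem peak.

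To finish, I would note that distinct peaks of the block yield distinct global peaks (they differ already in the first $L'$ coordinates), so the correspondence is genuinely one peak-per-peak as claimed, and I would point to Figure 4 where the four marked background-independent subgraphs illustrate the construction. If one wants to avoid invoking the generic no-ties hypothesis, one can instead run the argument directly: among all global genotypes whose block-part equals $h$, pick one of maximal fitness; maximality handles type-(ii) neighbors and the graph preserving property handles type-(i) neighbors exactly as above.

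The main obstacle — really the only subtle point — is making precise why "peak in a graph preserving subsystem" is a background-independent notion and then correctly handling the two flavors of neighbor; the type-(i) step is where the definition of graph preserving block is used, and one must be careful that it is stated for mutational neighbors $g,g'$ (which is exactly what is needed, since a neighbor of $h\tilde s$ inside the block corresponds to a mutational neighbor of $h$). Everything else is bookkeeping about concatenation of genotypes, and the argument is essentially identical in structure to the one already given after the statement of Observation \ref{global}.
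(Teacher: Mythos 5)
Your argument is correct and is essentially the paper's own (implicit) argument: pick the background $\tilde s$ maximizing (or locally maximizing) fitness among genotypes whose block part is the subsystem peak, then use maximality for off-block neighbors and the background-independence of the block's fitness graph for in-block neighbors, exactly as in the parenthetical argument given for the condition $\ast$ and Observation \ref{global}. No genuinely different route is taken, so nothing further needs comparison.
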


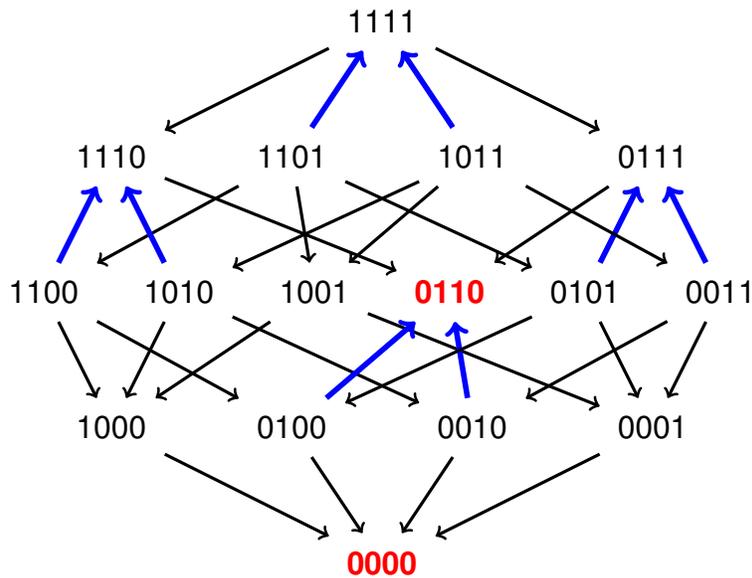
\begin{figure}\label{irr}
\begin{tikzpicture}
[very thick,black,->,outer sep=1mm, scale=0.9]
\node  [color=red] (n0) at (0, -4.0) {\bf 0000};
\node (n1) at (4.0, -2.0) {0001};
\node (n2) at (1.3333333333333335, -2.0) {0010};
\node (n3) at (5.0, 0) {0011};
\node (n4) at (-1.333333333333333, -2.0) {0100};
\node (n5) at (3.0, 0) {0101};
\node [color=red] (n6) at (1.0, 0) {\bf 0110};
\node  (n7) at (4.0, 2.0) {0111};
\node (n8) at (-3.9999999999999996, -2.0) {1000};
\node (n9) at (-1.0, 0) { 1001};
\node  (n10) at (-3.0, 0) {1010};
\node (n11) at (1.3333333333333335, 2.0) {1011};
\node (n12) at (-5.0, 0) {1100};
\node (n13) at (-1.333333333333333, 2.0) {1101};
\node (n14) at (-3.9999999999999996, 2.0) {1110};
\node (n15) at (0, 4.0) {1111};
\draw (n1) -- (n0);
\draw (n2) -- (n0);
\draw (n4) -- (n0);
\draw (n8) -- (n0);
\draw (n3) -- (n1);
\draw  (n5) -- (n1);
\draw (n9) -- (n1);
\draw (n3) -- (n2);
\draw [color=blue, line width=2.1 pt] (n2) -- (n6);
\draw (n10) -- (n2);
\draw (n5) -- (n4);
\draw [color=blue, line width=2.1 pt] (n4) -- (n6);
\draw (n12) -- (n4);
\draw [color=blue, line width=2.1 pt]  (n3) -- (n7);
\draw [color=blue, line width=2.1 pt] (n5) -- (n7);
\draw  (n7) -- (n6);
\draw (n9) -- (n8);
\draw (n10) -- (n8);
\draw (n12) -- (n8);
\draw (n11) -- (n3);
\draw (n11) -- (n9);
\draw (n11) -- (n10);
\draw (n13) -- (n5);
\draw (n13) -- (n9);
\draw (n13) -- (n12);
\draw (n14) -- (n6);
\draw  [color=blue, line width=2.1 pt] (n10) -- (n14);
\draw  [color=blue, line width=2.1 pt]  (n12) -- (n14);
\draw (n15) -- (n7);
\draw   [color=blue, line width=2.1 pt] (n11) -- (n15);
\draw  [color=blue, line width=2.1 pt] (n13) -- (n15);
\draw (n15) -- (n14);
\end{tikzpicture}
\caption{The subsystems determined by the central block of length 2, marked with blue arrows, have reciprocal sign epistasis on all backgrounds.
The obstacle prevents evolution from 1111 to the global peak 0000. All arrows except the blue
ones  point toward 0000.}
\end{figure}

A closer look at the study of malarial drug resistance for the drug-free environment (Figure 4) reveals a pattern 
that is very similar to the graph preserving block shown in Figure 5. For an easy comparison, Figure 6
is a copy of Figure 4 with the relevant arrows marked blue. The arrows agree with Figure 5 except
for a single arrow marked red. The red arrow leads directly to a suboptimal peak (the arrow could otherwise
have served as an escape). It is "almost true" that a graph preserving block prevents reverse evolution.

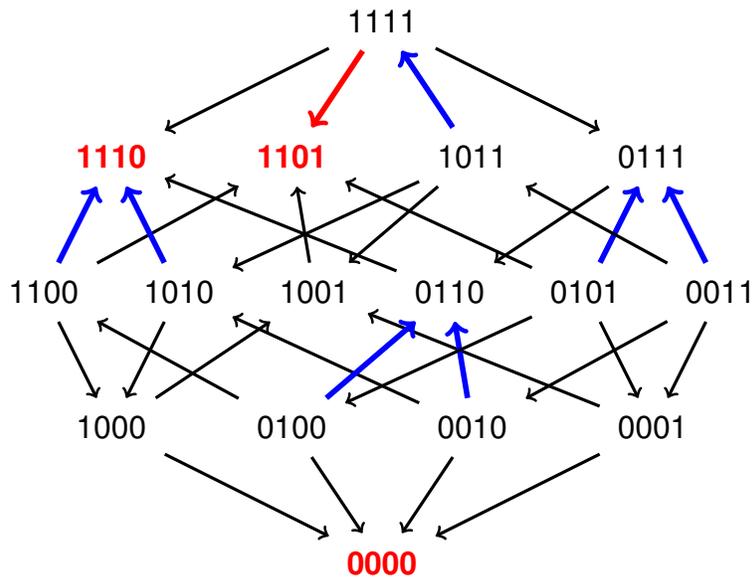
\begin{figure}

\begin{tikzpicture}
[very thick,black,->,outer sep=1mm, scale=0.9]
\node  [color=red] (n0) at (0, -4.0) {\bf 0000};
\node (n1) at (4.0, -2.0) {0001};
\node (n2) at (1.3333333333333335, -2.0) {0010};
\node (n3) at (5.0, 0) {0011};
\node (n4) at (-1.333333333333333, -2.0) {0100};
\node (n5) at (3.0, 0) {0101};
\node  (n6) at (1.0, 0) {0110};
\node  (n7) at (4.0, 2.0) {0111};
\node (n8) at (-3.9999999999999996, -2.0) {1000};
\node (n9) at (-1.0, 0) { 1001};
\node  (n10) at (-3.0, 0) {1010};
\node (n11) at (1.3333333333333335, 2.0) {1011};
\node (n12) at (-5.0, 0) {1100};
\node [color=red]  (n13) at (-1.333333333333333, 2.0) {\bf1101};
\node [color=red] (n14) at (-3.9999999999999996, 2.0) {\bf 1110};
\node (n15) at (0, 4.0) {1111};
\draw (n1) -- (n0);
\draw (n2) -- (n0);
\draw (n4) -- (n0);
\draw (n8) -- (n0);
\draw (n3) -- (n1);
\draw  (n5) -- (n1);
\draw (n1) -- (n9);
\draw (n3) -- (n2);
\draw  [color=blue, line width=2.1 pt] (n2) -- (n6);
\draw  (n2) -- (n10);
\draw (n5) -- (n4);
\draw  [color=blue, line width=2.1 pt]  (n4) -- (n6);
\draw (n4) -- (n12);
\draw [color=blue, line width=2.1 pt] (n3) -- (n7);
\draw [color=blue, line width=2.1 pt] (n5) -- (n7);
\draw (n7) -- (n6);
\draw  (n8) -- (n9);
\draw (n10) -- (n8);
\draw  (n12) -- (n8);
\draw (n3) -- (n11);
\draw (n11) -- (n9);
\draw (n11) -- (n10);
\draw  (n5) -- (n13);
\draw  (n9) -- (n13);
\draw (n12) -- (n13);
\draw (n6) -- (n14);
\draw [color=blue, line width=2.1 pt] (n10) -- (n14);
\draw [color=blue, line width=2.1 pt](n12) -- (n14);
\draw (n15) -- (n7);
\draw [color=blue, line width=2.1 pt](n11) -- (n15);
\draw  [color=red,  line width=2.1 pt] (n15) -- (n13);
\draw (n15) -- (n14);
\end{tikzpicture}
\caption{The fitness graph agrees with Figure 4. Similar to Figure 5 all arrows  right  below genotypes of the form $\ast 11\star$ (marked blue) point up, with one
exception (marked red). The red arrow leads directly to a suboptimal peak.}
\end{figure}

As demonstrated, the existence of single graph preserving block with suboptimal
peaks implies that there are suboptimal peaks in the global fitness landscape. 
The following schematic example illustrates
the impact of multiple rank order preserving blocks.

\begin{example}\label{unfav}
Assume that  the genotypes in an $L$-locus system can be partitioned into
blocks consisting of two loci, where for each block (using informal notation)
$w_{11}>w_{00}>w_{10}> w_{01}$.
For $L=6$ there are eight peaks:
\[
000000,
110000,
001100,
000011,
111100,
110011,
001111,
111111
\]
Evolution from $000000$ to $111111$  requires passing three obstacles, i.e.,
moving from $00$ to $11$ for each one of the three blocks.
Analogously, for $L=40$ there are about a million peaks, a billion genotypes,
and 20 obstacles.
In general, the peak density  $2^{-L/2}$ decreases by $L$.
However, it is fair to say that the landscapes are equally (un)favorable for all $L$
since the number of obstacles ($L/2$) is proportional to $L$.
\end{example}

\begin{observation}
If the peak density decreases by $L$ for a class of fitness landscapes,
it does not follow that the landscape becomes more favorable by $L$.
\end{observation}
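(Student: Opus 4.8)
The plan is to establish the statement by exhibiting the family of landscapes from Example \ref{unfav} as a counterexample, so the argument is really a verification rather than a fresh derivation. First I would recall the structure of that family: for $L$ even, the $L$-locus genotypes are partitioned into $L/2$ independent two-locus blocks, each carrying the local pattern $w_{11}>w_{00}>w_{10}>w_{01}$. Each block has reciprocal sign epistasis and is rank order preserving, so by Observation \ref{global} every block contributes peaks, and since the blocks are independent the global peaks are exactly the $2^{L/2}$ genotypes obtained by choosing $00$ or $11$ in each block. Hence the peak density is $2^{L/2}/2^{L}=2^{-L/2}$, which indeed decreases (in fact exponentially) with $L$ — this is the hypothesis of the observation.

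Next I would make precise the sense in which favorableness fails to improve. The natural measure is the number of obstacles — two-locus reciprocal-sign-epistasis subsystems of the type in Figure 2B — that an accessible trajectory from the wild-type $00\cdots 0$ to the global peak $11\cdots 1$ must traverse. Within each block the only routes from $00$ to $11$ pass through a strictly less fit intermediate $10$ or $01$, so every block is a genuine obstacle; because the blocks are disjoint and act independently, these obstacles cannot be amortized against one another, and any trajectory reaching the global peak must clear all $L/2$ of them. Thus the obstacle count grows linearly in $L$, and the structural difficulty of the landscape — length of the required detours, number of suboptimal peaks able to trap a hill-climbing walk (itself of order $2^{L/2}$), distance from the ``all arrows up'' ideal — is non-decreasing, indeed worsening, as $L$ increases.

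The step needing the most care is that there is no single agreed formal notion of ``more favorable,'' so I would state explicitly that the counterexample refutes the implication under each reading used earlier in the paper: the all-arrows-up graph is further away, the number of local obstacles is larger, and the number of trapping peaks is growing. Under every one of these readings the family of Example \ref{unfav} has peak density tending to $0$ while favorableness does not improve, which is exactly what the observation asserts; no computation beyond the peak-and-obstacle bookkeeping already done in Example \ref{unfav} is required.
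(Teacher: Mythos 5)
Your proposal is correct and takes essentially the same route as the paper: the observation there is justified precisely by Example~\ref{unfav}, whose block structure yields peak density $2^{-L/2}$ decreasing in $L$ while the number of obstacles $L/2$ grows proportionally to $L$, so favorableness does not improve. Your extra bookkeeping about possible readings of ``more favorable'' merely elaborates the paper's informal remark that these landscapes are equally (un)favorable for all $L$.
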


\begin{observation}\label{product}
If the $L$ sequence (the genome) can be partitioned into graph preserving blocks,
then the number of peaks equals the product of the number of peaks in each block.
\end{observation}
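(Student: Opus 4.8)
The plan is to show that a genotype in the global $L$-locus system is a peak if and only if its restriction to each graph preserving block is a peak in that block. Since the blocks partition the genome, a global genotype $g$ decomposes uniquely as $g = (g_1, g_2, \dots, g_k)$ where $g_i$ is the genotype on the $i$-th block; counting peaks then amounts to counting tuples whose components are each block-peaks, and the product formula follows.

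First I would recall what it means for $g = (g_1,\dots,g_k)$ to be a peak in the global system: every mutational neighbor of $g$ has lower fitness. A mutational neighbor of $g$ differs at exactly one locus, hence lies in exactly one block $i$, and has the form $(g_1,\dots,g_i',\dots,g_k)$ where $g_i'$ is a mutational neighbor of $g_i$ within the $i$-th block. So the global peak condition splits into $k$ independent conditions: for each block $i$ and each neighbor $g_i'$ of $g_i$ in that block, $w_{g} > w_{g'}$ where $g'$ is $g$ with its $i$-th component replaced by $g_i'$. Now invoke the defining property of a graph preserving block: the fitness comparison $w_{gs} > w_{g's}$ for mutational neighbors $g,g'$ in the block depends only on the order $\prec$ on that block, not on the background $s$ (which here encodes the components on all the other blocks). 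Hence the condition "for every neighbor $g_i'$ of $g_i$, the replacement decreases fitness" is equivalent to "$g_i$ is a peak in the $i$-th block's fitness graph," independently of what the other components are.

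Putting these equivalences together, $g$ is a global peak exactly when $g_i$ is a peak in block $i$ for every $i = 1,\dots,k$. Therefore the set of global peaks is in bijection with the Cartesian product of the sets of peaks of the individual blocks, and taking cardinalities gives that the number of global peaks equals $\prod_{i=1}^{k}(\text{number of peaks in block }i)$. This also recovers the count in Example \ref{unfav}, where each of the $L/2$ two-locus blocks has two peaks ($00$ and $11$ in the informal notation), giving $2^{L/2}$ peaks, matching the stated peak density.

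The one place to be careful — and the only real obstacle — is the direction "$g_i$ a peak in its block for all $i$ $\Rightarrow$ $g$ a global peak," since a priori one might worry about neighbors of $g$ that change more than one block or that the background-independence is needed in both directions. Both worries dissolve: mutational neighbors change a single locus and hence a single block by definition, and the graph preserving property is precisely the background-independence of each block's edge orientations, so the peak status of $g_i$ computed with background $s$ (the other components of $g$) coincides with its peak status in the abstract block graph. I would also note the mild implicit hypothesis, inherited from the surrounding discussion, that no two genotypes share a fitness value, so that "peak" is unambiguous and the strict inequalities in the definition of a graph preserving block are exactly what is needed.
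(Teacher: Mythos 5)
Your proof is correct and follows essentially the same route as the paper: the paper's argument also decomposes $g$ into its block components and asserts that $g$ is a global peak if and only if each component is a peak in its block, then multiplies. You simply spell out, in more detail than the paper does, why the graph preserving property makes each block's peak condition background-independent, which is a faithful elaboration rather than a different approach.
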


\begin{proof}
Let $b_1, \dots, b_r$ be the blocks and assume that $b_i$ has $n_i$ peaks.
Let $g=g_1 \ldots g_r$  be a genotype such that $g_i \in b_i$.
Then $g$ is a peak if and only if each $g_i$ is a peak in the block $b_i$.
Consequently, there are in total  $n_1 \times \dots \times n_r$  
peaks in the global fitness landscape. 
\end{proof}

Consider the category of fitness landscapes such that the genome
can be partitioned into rank order preserving blocks.
The block landscapes introduced in  \citet{pm} belong to the category.
Specifically, the landscapes are defined so that each block contributes independently to fitness, 
and fitness values within blocks are assigned randomly. 
Observation \ref{product} for block landscapes was stated in \citet{sk}. 
Landscapes in the category are similar in that obstacles in each block
have a global impact (in contrast to for instance hop-to-top landscapes where
subsystems with reciprocal sign epistasis have no relevance).
For landscapes in the category, the problem of finding the global peak in the $L$-locus system 
is equivalent to the combined problem of finding 
the optimal sequence for each block (as in Example \ref{unfav}).
It follows that adding blocks, all else equal, does not make the fitness landscapes 
more favorable.

Fitness landscapes such that the $L$-sequence
can be partitioned into graph order preserving blocks 
differ substantially from the rank order preserving case.
The reason is that the optimal sequence for a particular
 block may depend on background. An example is the
 following 4-locus system.

\begin{example}\label{plea}
Assume that  the genotypes in an $4$-locus system can be partitioned into
blocks consisting of two loci, where $00$ and $11$ have higher
fitness than the intermediates $10$ and $01$ in each block (again using informal
notation). The peaks are
\[
0000,
1100,
0011,
1111,
\]
Moreover, assume that 
\[
w_{1111}>w_{0000}> w_{1100}>w_{0011}
\]
Evolution from $0000$ to $1111$  is difficult since
both $0000 \mapsto 1100$  and $0000 \mapsto 0011$  
decrease fitness. 
\end{example}


It is instructive to compare $\ast$ and other rank order conditions discussed
here with conventional models of fitness landscapes.
The condition $\ast$  does obviously not hold in the
absences of sign epistasis, in particular not for additive fitness landscapes.  
The condition  $\ast$  is also incompatible with 
 hop-to-top landscapes constructed  by
 a random-fitness assignment
 (see the introduction).
 The reason is that for a  random fitness landscape $w$,
the inequality $ w_{00s} > w_{10 s}$ cannot hold for all 
$s$ if $L$ is large.

An empirical study uses assumptions similar to the hop-to-top
landscapes, i.e., fitness is randomly assigned from a uniform distribution between $0$ and $1$,
for describing worst-case scenarios for adaptation  \citep{gla},
with the difference that fitness is assigned to phenotypes rather than genotypes.
Note that $\ast$ cannot hold for a random fitness landscape $w$
as described. The reason is that genotypes of the form $00s$ correspond to
many phenotypes, and similarly for genotypes $10s$.
Consequently there are  both $s'$ such that $w_{10s'}> w_{00s'}$ and 
$s''$ such that $w_{10s''}< w_{00s''}$.
In other words, the assumptions on $w$ are not compatible with $\ast$
or similar rank order conditions.
It follows  that random assumptions do not describe worst-case scenarios
for fitness landscapes in settings where rank order preserving blocks
are important.

\subsection{Suboptimal plateaus}
Some fitness landscapes have a high degree of redundancy.
Consequently, it is of interest to consider
landscapes where mutational neighbors are allowed to have
the same fitness. Fitness graphs for such landscapes can be drawn
similarly to  standard fitness graphs, except that some arrows
would be replaced by edges.

\begin{definition}
A genotype $g$ belongs to a suboptimal plateau in
a fitness landscape if
\begin{itemize}
\item[(i)]  All neighbors have the same or lower fitness than $g$, and
\item[(ii)] at least one genotype in the fitness landscape have higher fitness than  $g$.
\end{itemize}
\end{definition}

For fitness landscapes with a high degree of redundancy,
an evolving population may be unable to reach a genotype
of high fitness because of suboptimal plateaus. 
The following example shows the impact of
plateaus.

\begin{example}
Assume that an 8-locus system consists of two blocks of 4 loci.
The first block is in state $0$ for the following eight sequences:
\[
0000, 1000, 0001, 1100, 0110, 1001, 1110, 1101,
\]
and in state $1$ for the remaining eight sequences
\[
0100, 0010, 1010, 0101, 0011,  1011, 0111, 1111.
\]
Notice that for any sequence, a single mutation can change the 
state of the block (from 0 to 1, or from 1 to 0).
Assume that the second block has similar properties.
Then the $8$-locus system  has four states $00, 10, 01, 11$
determined by the state of each block.
If the  fitness of a genotype in the 8-locus system is determined its state,
then the landscape is analogous to  a biallelic 2-locus system.

For instance, if $w_{11}>w_{00}>w_{10}>w_{01}$, where $w_{ij}$ denotes
the fitness for the state $ij$, then the 64 genotypes that represent the
state $00$ constitute a suboptimal plateau.  Neutral mutations are 
available, but no sequence of neutral mutations result in a genotype
such that beneficial mutations are possible.
 \end{example}

By using a similar construction, one can obtain a
landscape with arbitrary redundancy from a 
fitness landscape $w$ without redundancy.
Specifically, assume that $w$ has $s$ peaks and that no two genotypes
have the same fitness. Construct $s$ blocks of $r$ loci,
such that 50 percent of the sequences in each block has state 0
and 50 percent state 1, and such that for any sequence 
a single mutation can change the state. 
(In the previous example $r=4$ and $s=2$. The construction is not more difficult for larger
$r$-values.)
If $r_1, \dots, r_s$ represent the
states for a genotype in the $L=rs$-locus system, then one assigns the
fitness  $w_{r_1 \dots r_s}$ to the genotype. The observation below follows.

\begin{observation}
For every fitness landscape with no redundancy,
one can construct an landscape with an arbitrarily high 
degree of redundancy,
such that each suboptimal peak in the first landscape
corresponds to a suboptimal plateau in the second landscape.
\end{observation}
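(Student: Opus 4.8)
The plan is to make the informal construction in the paragraph preceding the statement fully precise and then verify the claimed correspondence directly. First I would fix a fitness landscape $w$ on a biallelic $L$-locus system with no redundancy (no two genotypes of equal fitness) having $s$ peaks. I would then build, for a chosen redundancy parameter $r$, the enlarged system on $L' = rs$ loci organized as $s$ blocks $b_1,\dots,b_s$ of $r$ loci each. The key combinatorial ingredient is a fixed ``state function'' $\sigma_r\colon \{0,1\}^r \to \{0,1\}$ with the two properties used in the 8-locus example: (i) exactly half of the $2^r$ sequences map to $0$ and half to $1$, and (ii) for every $u \in \{0,1\}^r$ there is a single-locus neighbor $u'$ of $u$ with $\sigma_r(u') \neq \sigma_r(u)$. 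I would note that such a $\sigma_r$ exists for every $r \ge 1$ — for instance, partition $\{0,1\}^r$ into the two color classes of the natural bipartition of the $r$-cube (parity of the number of $1$'s), which is balanced and has the property that every vertex has a neighbor of the opposite color; any balanced function that is nonconstant on each edge works, and one can also simply let $\sigma_r(u)$ be the parity of $u$. Given a genotype $G$ in the enlarged system, write $G = (u_1,\dots,u_s)$ with $u_i \in \{0,1\}^r$ the restriction to block $b_i$, set $\sigma(G) = (\sigma_r(u_1),\dots,\sigma_r(u_s)) \in \{0,1\}^s$, and define the enlarged fitness by $W_G = w_{\sigma(G)}$.

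Next I would record the two structural facts that drive the proof. First, a single-locus mutation in the enlarged system changes exactly one coordinate of some $u_i$, hence changes $\sigma(G)$ in at most one coordinate (either leaving $\sigma(G)$ unchanged, a neutral mutation, or flipping exactly one bit of $\sigma(G)$). Conversely, by property (ii) of $\sigma_r$, from any $G$ and any target bit-flip of $\sigma(G)$ there is a single-locus mutation realizing it. So the quotient map $\sigma$ is a graph morphism from the $L'$-cube onto the $s$-cube that is ``locally surjective'' in this sense, and the fitness $W$ is the pullback of $w$ along $\sigma$. Second, since $w$ has no two equal values, two adjacent genotypes $G, G'$ in the enlarged system satisfy $W_G = W_{G'}$ precisely when $\sigma(G) = \sigma(G')$, and otherwise one is strictly larger than the other according to $w$.

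From here the correspondence is almost immediate. I would take a suboptimal peak $p \in \{0,1\}^s$ of $w$ and consider the fiber $\Pi_p = \sigma^{-1}(p)$, a set of $2^{(r-1)s}$ genotypes in the enlarged system. For any $G \in \Pi_p$ and any neighbor $G'$: if $\sigma(G') = p$ then $W_{G'} = W_G$; if $\sigma(G') \ne p$ then $\sigma(G')$ is a neighbor of $p$ in the $s$-cube, so $W_{G'} = w_{\sigma(G')} < w_p = W_G$ because $p$ is a peak. Thus every neighbor of $G$ has fitness $\le W_G$, i.e. condition (i) of the plateau definition holds; condition (ii) holds because $p$ is suboptimal, so some $q$ has $w_q > w_p$, and any $G''$ with $\sigma(G'') = q$ satisfies $W_{G''} > W_G$. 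Hence $\Pi_p$ is a suboptimal plateau, and distinct peaks of $w$ give disjoint fibers, so the assignment $p \mapsto \sigma^{-1}(p)$ is the desired injection from suboptimal peaks of $w$ to suboptimal plateaus of $W$. Finally I would note that the redundancy can be made arbitrarily high by taking $r$ large, since then generic fibers have size $2^{(r-1)s} \to \infty$. The main obstacle is really just bookkeeping: making sure the state function $\sigma_r$ genuinely has the neighbor-flipping property (ii) for all $r$, not merely $r = 4$ — the parity/bipartition choice settles this cleanly, so no serious difficulty remains, and the rest is the routine verification above.
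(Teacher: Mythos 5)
Your overall skeleton is exactly the paper's construction made precise (blocks carrying a balanced binary ``state'' function, with the enlarged fitness defined as the pullback $W_G = w_{\sigma(G)}$ of the base landscape along the state map), and the final verification of conditions (i) and (ii) for the fibers over suboptimal peaks is the right routine check. However, your specific choice of $\sigma_r$ is where the argument genuinely breaks down: you only impose that $\sigma_r$ is balanced and that every sequence has a neighbor of \emph{opposite} state, and you then take $\sigma_r$ to be parity. Under parity \emph{every} single-locus mutation flips the block state, so in the enlarged landscape no two mutational neighbors ever have equal fitness: the case ``$\sigma(G')=p$, hence $W_{G'}=W_G$'' in your verification is empty, the fiber $\Pi_p$ is an independent set of $2^{(r-1)s}$ isolated strict peaks, and there are no neutral mutations anywhere. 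That defeats the content of the observation. The section defines redundancy precisely as mutational neighbors being allowed to have the same fitness, and the point of the construction (see the paper's $8$-locus example, where ``neutral mutations are available, but no sequence of neutral mutations result in a genotype such that beneficial mutations are possible'') is that suboptimal peaks become genuine plateaus, i.e.\ regions of constant fitness connected by neutral steps with no uphill escape. With parity you produce neither redundancy in that sense nor plateaus, only replicated peaks, so the statement is satisfied at best vacuously under a strained reading of the definitions.

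The fix is small but it is a real extra requirement on the state function: in addition to being balanced and admitting a state change by a single mutation from every sequence, every sequence should also have a \emph{same-state} neighbor (and, if you want each fiber to be a single connected plateau, each state class should be connected in the $r$-cube). The simplest valid choice is $\sigma_r(u)=u_1$, the first coordinate of the block (equivalently, append $r-1$ neutral loci per original locus); the paper's explicit $r=4$ block also works. With such a $\sigma_r$ your fiber argument goes through verbatim and now yields connected constant-fitness regions with neutral mutations available, which is what ``suboptimal plateau'' is meant to capture. One further slip, inherited from the paper's wording but made consequential in your write-up: you take $s$ to be the number of peaks of $w$ yet define $W_G = w_{\sigma(G)}$ with $\sigma(G)\in\{0,1\}^s$, which only makes sense if there is one block per \emph{locus} of $w$; the blocks should be indexed by the loci of the base landscape, not by its peaks.
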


\subsection{Irreversible evolution} 
If the optimal genotypes for an organism
differ between two environments A and B,
it is interesting to analyze forward and reverse evolution.
The motivating example is costly drug resistance.
This section does not include empirical examples,
but rather an analysis of small systems from
a theoretical point of view.

If the adaptation to a new environment depends on a single point mutation,
then evolution is reversible. However, the case $L=2$ is already 
more interesting.
Assume that $00$ is optimal in original environment
and $11$ in the new environment, so that
in a limited sense there
is a trade-off between optimal fitness
in the two environments.
Then there are  (in principe) two fitness graphs that allow
for forward evolution 
described  by the graphs 1A (all arrows up) and 2A  (exactly one arrow down).

Fitness graph 1A for forward evolution:  
Both paths $00 \mapsto 10 \mapsto 11$ and $00 \mapsto 01 \mapsto 11$
are accessible, i.e., the mutations are beneficial independent of background.
In this case it seems plausible that reverse mutations would be beneficial in
the original environment (see also the conjecture below).

Fitness graph 2A for forward evolution:  
By assumption, exactly one path is accessible,
described as $00 \mapsto 10 \mapsto 11$.
The mutation $0 \mapsto 1$ at the right locus
is only beneficial if the left locus is mutated, i.e.,
there is no independent advantage for the mutation in the new environment. 
Consequently, the advantage may have nothing to do
with the new environment but rather constitute an adjustment because
of the left substitution. If that is the case, it seems likely that $11$ 
has higher fitness than $10$ {\emph{also in the original environment}}, which
would imply irreversible evolution (the fitness graph would agree with Figure 2B).

Based on the discussion, one can try to relate 
forward and reverse evolution for $L=2$.
The assumptions are that
11 has highest fitness in the new environment, 00 in the original environment,
and that forward evolution is possible (the corresponding fitness graph agrees with 1A or 2A).
\begin{conjecture}
Under the assumption stated, there is a correlation between 
that fitness graph 1A represents forward evolution and 
that evolution is reversible.
\end{conjecture}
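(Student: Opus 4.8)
The plan is to make the word ``correlation'' precise by introducing a joint probability distribution on pairs of fitness landscapes --- one for the original environment $A$ and one for the new environment $B$ on the same $L=2$ system --- and then to show that, conditioned on the stated assumptions ($00$ is the peak in $A$, $11$ is the peak in $B$, and the $B$-graph is $1A$ or $2A$), the event ``the $B$-graph is $1A$'' and the event ``evolution is reversible'' are positively correlated. Concretely, I would write each landscape as additive main effects at the two loci plus a single epistatic interaction term, and model the two environments as sharing, or having nonnegatively correlated, interaction terms, while the drug perturbs only the main effects. This modeling choice is exactly what encodes the heuristic in the text: a down-arrow in $B$ forces sign epistasis in $B$, hence --- through the shared interaction term --- makes sign epistasis in $A$ more likely, and the only way for $A$ to still be reversible is for that sign epistasis not to be reciprocal with peaks at $00$ and $11$.

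First I would enumerate, for the two-cube, which rank orders of the four genotypes give graph $1A$ versus graph $2A$ in $B$, and which rank orders in $A$ are reversible, namely all those in which $00$ is reachable from $11$ by a monotone path --- i.e.\ everything except the reciprocal-sign-epistasis order with peaks $00$ and $11$, which is graph $2B$ oriented toward $00$. Next I would translate these rank-order conditions into sign conditions on the main-effect and interaction parameters, so that both the conditioning event and the ``reversible'' event become polyhedral regions in parameter space (after taking logarithms of fitness). Then I would compute, or at least bound below, the difference $\Pr[\text{reversible}\mid 1A]-\Pr[\text{reversible}\mid 2A]$ and show it is strictly positive. The cleanest route is probably to prove the stronger statement that graph $2A$ in $B$ forces, or makes overwhelmingly likely, reciprocal sign epistasis in $A$ with peaks $00$ and $11$ --- matching the reading that the right-locus mutation is a compensatory adjustment to the left substitution and therefore persists in $A$ --- while graph $1A$ in $B$ remains compatible with no sign epistasis in $A$, hence with full reversibility.

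The main obstacle is that the conjecture is inherently model-dependent: ``correlation'' is meaningless without a distribution, and one can engineer adversarial priors under which the correlation vanishes or even reverses. The real content of a proof is therefore a \emph{robustness} claim: the positive correlation should hold for every member of a natural family of priors --- for instance, any prior under which the two loci's main effects are exchangeable and independent of a shared, or nonnegatively correlated, interaction term. Establishing the inequality uniformly over such a family, rather than for one hand-picked prior, is where the work lies. A secondary difficulty is the weaker direction $1A\Rightarrow$ reversible, where one cannot force reversibility but only argue that $\Pr[\text{reversible}\mid 1A]$ stays strictly above $\Pr[\text{reversible}\mid 2A]$; here I expect to need a genuine comparison of the two conditional laws rather than a one-sided bound, and this comparison --- showing that conditioning on a background-independent beneficial mutation does not \emph{decrease} the chance of a favorable $A$-landscape --- is the technical heart of the argument.
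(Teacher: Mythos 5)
The statement you are trying to prove is a \emph{conjecture} in the paper: the author gives no proof at all, only the informal motivation in Section 2.3 (if forward evolution follows graph 2A, the right-locus mutation has no independent advantage in the new environment, so it is plausibly a compensatory adjustment to the left substitution and hence $w_{11}>w_{10}$ also in the original environment, giving graph 2B and irreversibility) plus a consistency check against the two two-locus subsystems through $1010$ and $0011$ in the malaria data. The paper even stresses that the claim is about a ``(possible) statistical correlation, not a general rule,'' i.e.\ it is deliberately left unformalized and unproven. So there is no paper proof to compare against, and your central heuristic --- that a down-arrow in the forward graph signals a compensatory interaction that persists in the original environment --- is precisely the paper's own informal argument, recast in probabilistic language.

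As a proof, however, your proposal has a genuine gap that you yourself flag: everything hinges on choosing a joint prior on the two landscapes (shared or nonnegatively correlated interaction term, exchangeable main effects), and the key inequality $\Pr[\text{reversible}\mid 1A]>\Pr[\text{reversible}\mid 2A]$ is never established, only announced as ``the technical heart.'' Without fixing a family of priors and actually carrying out the polyhedral/sign-pattern computation uniformly over that family, you have a reformulation of the conjecture, not a proof of it; and since the conjecture as stated in the paper does not specify any distribution, any such result would be a new, model-dependent theorem rather than a verification of the paper's claim. If you pursue this, the honest framing is: propose the prior family as a definition, prove the inequality within it (the $L=2$ case is small enough that the sign-epistasis regions can be enumerated exactly), and present it as one possible rigorous version of the conjecture rather than as its proof.
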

The conjecture concerns a (possible) statistical correlation, not
a general rule.
Figure 3 and 4 show six two-locus subsystems
that includes $0000$. The conjecture applies
to the two systems defined by the
double mutants $1010$ and $0011$, respectively.
For both systems forward evolution agrees with
Figure 1A, and evolution is reversible.

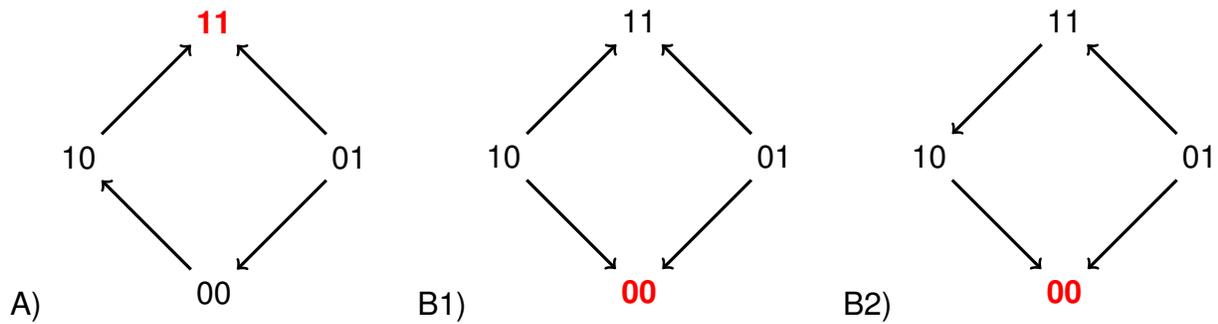
\begin{figure}
A)
\begin{tikzpicture}
[thick, color=black,scale=0.9]
  \node (n1) at (4,0) {00};
  \node  (n2) at (2,2)  {10};
  \node   (n3) at (6,2)  { 01};
  \node [color=red] (n4) at (4,4) {\bf 11};
  \foreach \from/\to in {n2/n4,n3/n4,n1/n2,n3/n1}
  \draw[very thick] [-> ](\from) -- (\to);
\end{tikzpicture}
 \quad
 B1)
\begin{tikzpicture}
[thick, color=black,scale=0.9]
 \node [color=red] (n1) at (4,0) {\bf 00};
  \node  (n2) at (2,2)  { 10};
  \node  (n3) at (6,2)  { 01};
  \node  (n4) at (4,4) {11};
  \foreach \from/\to in {n3/n4,n2/n4,n2/n1,n3/n1}
  \draw[very thick] [-> ](\from) -- (\to);
  \end{tikzpicture}
  \quad
   B2)
\begin{tikzpicture}
[thick, color=black,scale=0.9]
 \node [color=red] (n1) at (4,0) {\bf 00};
  \node  (n2) at (2,2)  { 10};
  \node  (n3) at (6,2)  { 01};
  \node  (n4) at (4,4) {11};
  \foreach \from/\to in {n3/n4,n4/n2,n2/n1,n3/n1}
  \draw[very thick] [-> ](\from) -- (\to);
  \end{tikzpicture}
  
\caption{
The fitness graph for forward evolution (A) has one 
accessible trajectory $00  \mapsto 10 \mapsto 11$.
The genotype $01$ has low fitness in all environments.
Evolution is irreversible if the fitness graph for the original
environment agrees with $B1$,
and reversible if the graph agrees with $B_2$.}
\end{figure}

\subsection{Irreversible evolution and  fluctuating drug concentrations}
We continue with assumptions very similar to Section 2.3, 
except that we consider different drug concentrations.
Specifically $L=2$ and if the drug concentration $C \geq C_T$
for some threshold concentration $C_T$, then $w_{10}>w_{00}$,
otherwise $w_{00}>w_{10}$. For simplicity, we also
assume that $w_{11}>w_{10}$ if $C \geq C_T$, so that the path
$00 \mapsto 10 \mapsto 11$ is accessible as soon as $C \geq C_T$.
The genotype $01$ has low fitness in all environments (similar to the
second case in Sections 2.3).
The relevant graphs are shown in Figure 7,
where 7A represents forward evolution and the two alternatives 
for reverse evolution are 7B1 and  7B2.

By working with very precise assumptions, one can clearly see
some of the mechanisms at play. The prospects for reverse evolution falls naturally into three 
cases described by the tables 1--3. In brief, the possible outcomes are:

\begin{table}
\begin{tabular}{ l  | c | l }
\hline
\hline
Concentration $C$  &  Rank order  &  Peaks  \\
 $C \geq  C_T$  &  $w_{11}>w_{10}>w_{00}$   & 11 \\
\hline
 $C <C_T$  & $w_{00}>w_{10}>w_{11}$ & 00 \\
 \hline
\end{tabular}
\caption
{Evolution is reversable. Regardless of drug concentration, there is one peak in the fitness landscape. The fitness
graph agrees with B2 for all $C<C_T$.}
\end{table}

\begin{table}
\begin{tabular}{ l  | c | l }
Concentration $C$  &  Rank order  &  Peaks  \\
 $C \geq  C_T$  &  $w_{11}>w_{10}>w_{00}$   & 11 \\
 \hline
 $ C_{\tilde{T}}  <  C < C_T$  &  $w_{11}>w_{00}>w_{10}$   & 00 and 11 \\
\hline
 $ C \leq  C_{\tilde{T}}$  &  $w_{00}>w_{11}>w_{10}$   & 00 and 11 \\
  \hline
\end{tabular}
\caption{Evolution is irreversible.
There are two peaks for all concentrations below the threshold $C_T$.
The rank order of $11$ and $00$ changes at some threshold concentration 
$C_{\tilde{T}}< C_T$, but the fitness of $10$ remains low. 
}
\end{table}

\begin{table}
\begin{tabular}{ l  | c | l }
Concentration $C$  &  Rank order  &  Peaks  \\
 $C \geq  C_T$  &  $w_{11}>w_{10}>w_{00}$   & 11 \\
\hline
 $ \hat{C_T }  <  C < C_T$  &  $w_{11}>w_{00}>w_{10}$   & 00 and 11 \\
  \hline
 $C   \leq \hat{C_T }$  & $w_{00}>w_{10}>w_{11}$ & 00 \\
 \hline
\end{tabular}
\caption{Evolution is reversible. However, the situation is less favorable
than the case described by Table 1, since there is a threshold asymmetry.
The fitness  graph agrees with $B1$ for  $ \hat{C_T }  <  C < C_T$
and with $B2$ for $C \leq \hat{C_T }$, i.e., the threshold concentration for 
development of resistance is higher than the threshold for its reversion.}
\end{table}

\begin{itemize}
\item[(i)] 
evolution can be reversed, and the threshold for reverse evolution is
the same as for forward evolution ($C_T$).
\item[(ii)] evolution is irreversible.
\item[(iii)]  
evolution can be reversed, but the
threshold  for  reverse evolution ($ \hat{C_T}$) is lower than for
forward evolution ($C_T$). 
\end{itemize}
In practical terms, the assymetry in case (iii) means
that resistance can be maintained for drug levels lower than what is necessary for
resistance development. However, reverse evolution is 
possible for a drug-free environment.

The very last case we consider for $L=2$ (Table 4) falls outside the
main topic for the paper because there is no trade-off
between fitness for the different environments (and
consequently no reason to expect reverse evolution). Rather does
the case sorts under cost-free drug resistance.
Similar to Tables 1-3, the wild-type $00$ has maximal fitness in the
drug free environment, $11$ for high concentrations,
whereas $01$ has low fitness in all environment.
However, in contrast to the previous tables,
$11$ has maximal fitness in all environments.

Under the given assumption, suppose that a population is exposed to low
drug concentrations for an extended period of time ($0<C<C_T$).
Then $00$ is a suboptimal peak. 
By assumption, the population cannot reach the global peak $11$ for low concentrations,
unless it is first exposed to high concentrations ($C \geq C_T$).
In other words, the system has "memory" of sort, since exposure to high drug concentrations
causes a permanent change (and increased fitness for any $C>0$).

\begin{table}
\begin{tabular}{ l  | c | l }
Concentration $C$  &  Rank order  &  Peaks  \\
 $C \geq  C_T$  &  $w_{11}>w_{10}>w_{00}$   & 11 \\
\hline
 $ 0  <  C < C_T$  &  $w_{11}>w_{00}>w_{10}$   & 00, 11  \\
  \hline
 $C=0$  & $w_{11}=w_{00}>w_{10}$ & 00, 11 \\
 \hline
\end{tabular}
\caption{Similar to Tables 1--3, the wild-type $00$ has maximal fitness in the
drug free environment, $11$ for high concentrations,
whereas $01$ has low fitness in all environment.
However, in contrast to the previous tables,
there is no longer a trade-off between environments, 
since $11$ has maximal fitness in all environments.}
\end{table}

Given the variation in behavior 
already for $L=2$, it is reasonable to expect interesting
dynamics for larger systems, see \citet{dkm, das}.
Returning to the malaria study, 10 different drug concentrations were considered.
Figure 8 summarizes information for all 10 drugs. Each arrow that 
points up for all 10 concentrations is marked red. The other arrows are black.
The graph shows that evolution from $1111$ to $0000$ is at least theoretically 
possible under fluctuating drug concentrations.

\begin{figure}
\begin{tikzpicture}
[very thick,black,->,outer sep=1mm, scale=0.9]
\node (n0) at (0, -4.0) {0000};
\node (n1) at (4.0, -2.0) {0001};
\node (n2) at (1.3333333333333335, -2.0) {0010};
\node (n3) at (5.0, 0) {0011};
\node (n4) at (-1.333333333333333, -2.0) {0100};
\node (n5) at (3.0, 0) {0101};
\node  (n6) at (1.0, 0) {0110};
\node  (n7) at (4.0, 2.0) {0111};
\node (n8) at (-3.9999999999999996, -2.0) {1000};
\node (n9) at (-1.0, 0) { 1001};
\node  (n10) at (-3.0, 0) {1010};
\node (n11) at (1.3333333333333335, 2.0) {1011};
\node (n12) at (-5.0, 0) {1100};
\node (n13) at (-1.333333333333333, 2.0) {1101};
\node  (n14) at (-3.9999999999999996, 2.0) {1110};
\node (n15) at (0, 4.0) {1111};
\draw (n1) -- (n0);
\draw (n2) -- (n0);
\draw (n4) -- (n0);
\draw (n8) -- (n0);
\draw (n3) -- (n1);
\draw  (n5) -- (n1);
\draw [color=red, line width=2.1 pt] (n1) -- (n9);
\draw (n3) -- (n2);
\draw (n6) -- (n2);
\draw  [color=red, line width=2.1 pt] (n2) -- (n10);
\draw (n5) -- (n4);
\draw (n6) -- (n4);
\draw (n12) -- (n4);
\draw  [color=red, line width=2.1 pt]  (n3) -- (n7);
\draw  [color=red, line width=2.1 pt]  (n5) -- (n7);
\draw (n7) -- (n6);
\draw  (n9) -- (n8);
\draw (n10) -- (n8);
\draw  (n12) -- (n8);
\draw (n11) -- (n3);
\draw (n11) -- (n9);
\draw (n11) -- (n10);
\draw (n13) -- (n5);
\draw (n13) -- (n9);
\draw (n13) -- (n12);
\draw  [color=red, line width=2.1 pt] (n6) -- (n14);
\draw  [color=red, line width=2.1 pt] (n10) -- (n14);
\draw  [color=red, line width=2.1 pt]  (n12) -- (n14);
\draw (n15) -- (n7);
\draw (n15) -- (n11);
\draw (n15) -- (n13);
\draw (n15) -- (n14);
\end{tikzpicture}
\caption{ The graph summarizes information for ten different concentrations of the drugs,
including the drug-free environment.
The red arrows indicate fitness differences that are consistent for all ten concentrations of the drug. In particular,
$1110$ has higher fitness than the double mutants $1100, 1010, 0110$, regardless of concentration.
Each black arrows indicates that fitness increases for at least one concentration of the drug. The graph shows
that fluctuating concentrations could restore the wild-type $0000$.}
\end{figure}
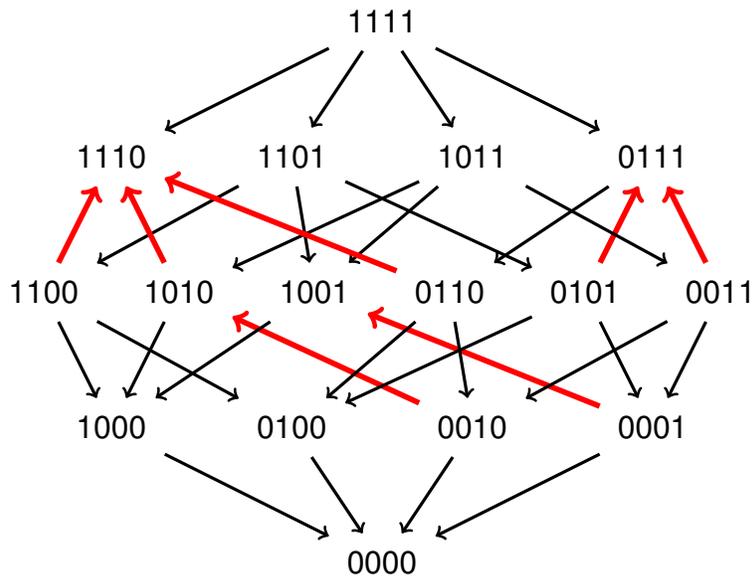

\section{Discussion}
Persistent drug resistance is a multifaceted problem.
Even if resistance is costly, a drug pause
does not necessarily restore susceptibility.
A complete analysis requires consideration of
both evolutionary processes, including reverse
mutations and accumulations of new compensatory
mutations,  and of non-evolutionary mechanisms
such as the  potential for regrowth of the former wild-type
and properties of replacement drugs.
One of the fundamental questions is
if evolution is reversible in principle (regardless
if reversions are plausible or not).
The question of genetic reversibility
is immediately related to obstacles
in fitness landscapes.

For analyzing local and global obstacles
we introduced some new concepts based on rank orders.
A rank order preserving block  is a subset of loci (sometimes referred to as a module) 
with the property that the rank order of genotypes that agree at all loci outside
of the block does not depend on background.
A weaker condition is a graph preserving block, where the
rank order of mutational neighbors that agree at all loci outside
of the block does not depend on background.
The condition implies that the fitness graphs are
similar regardless of background (Figure 5).
The existence of a rank order preserving block
with suboptimal peaks implies that there are also suboptimal
peaks in the global fitness landscape, and likewise for graph preserving blocks.
For a study of irreversible malarial drug resistance \citep{oh}, we identified a double peaked
graph preserving block (modulo a single deviating genotype).

If the $L$-sequence (the genom) can be partitioned into rank order preserving blocks,
the result can be considered a generalization of block landscapes \citep{pm}.
All else equal, adding more blocks does not make the fitness landscape 
more favorable.

In general, rank order induced (or signed) interactions  \citep{c, clg, cgg}
including signed versions 
of higher order epistasis and circuits (introduced to biology
in \citet{bps}),
have been used for analyzing accessibility and obstacles
in fitness landscapes, as well as for
detecting interactions from incomplete data.
Rank order and graph preserving blocks
provide similar insights,
and obvioulsy all the signed concepts are analogous to
sign epistasis \citep{wwc} in that they capture
order implications and are blind for magnitude differences
that do no imact on rank orders.

We considered the relation between fitness landscapes for forward
and reverse evolution.
For $L=2$ we conjecture that absence of sign epistasis for forward
evolution correlate with reversible evolution.
More generally, one can ask if favorable landscapes 
for forward evolution correlate with reversibility.
Results in \cite{dkm, das} are compatible with such a claim,
but more empirical studies would be necessary for a 
conclusion.

For landscapes defined by different drug concentrations, 
it is of interest to compare concentration thresholds for
forward and reverse evolution.
The dynamics for $L=2$ is already interesting. 
We demonstrated that successful adaptation
to low drug concentration may require a history 
of adaptation to high drug concentrations.
We have argued that thresholds asymmetries
are plausible. 
The impact of fluctuating drug concentrations
was analyzed for the study on
irreversible malarial drug resistance
mentioned. Reversion to the original
wild-type was at least theoretically possible, which illustrates
that fluctuating concentration within the range of
two extremes (here high drug concentration and the drug-free environment)
can result in qualitatively different outcomes as compared to 
switches between the extremes.

We have pointed out that peak and rank order preserving 
blocks are incompatible with some standard constructions of
fitness landscapes that use random fitness
(see the discussion about hop-to-top landscapes),
and  that neither redundancy nor decreased peak density by $L$
imply that fitness landscapes are favorable for large $L$.
It appears that no simple summary
statistics can predict whether or not a fitness landscape
is favorable.
A natural category of rugged landscapes 
with good peak accessibility is identified in \cite{das}, which is another indication
that ruggedness alone does  not reveal the character of
a fitness landscape.

However, evolutionary reversibility is a potential indicator of 
fundamental properties of fitness landscapes.
Whether or not the wild-type can be restored
and properties of new genotypes that result from
compensatory mutations carry information about
peak constellations, accessibility and constraints in
the landscape. Sufficiently complete and precise 
empirical studies of resistance and its reversal could 
contribute to a better understanding of microbial evolution,
in particular of microbial fitness landscapes.

\end{document}